\def\ps@headings{%
\def\@oddhead{\mbox{}\scriptsize\rightmark \hfil \thepage}%
\def\@evenhead{\scriptsize\thepage \hfil \leftmark\mbox{}}%
\def\@oddfoot{}%
\def\@evenfoot{}}
\newtheorem{theorem}{Theorem}
\begin{document}
\title{A Security Architecture for Data Aggregation  and Access Control in Smart Grids}

\author{
\IEEEauthorblockN{Sushmita Ruj, Amiya Nayak and Ivan Stojmenovic\\}
\IEEEauthorblockA{
SEECS, University  of Ottawa,\\
Ottawa K1N 6N5, Canada   \\
Email: \{sruj, anayak, ivan\}@site.uottawa.ca}}

\maketitle{}

\begin{abstract}
We propose an integrated architecture for smart grids, that supports data aggregation  and access control.
Data can be aggregated by home area network, building area network and neighboring area network in such a way that the privacy of customers is protected.
We use homomorphic encryption technique to achieve this.
The consumer data that is collected is sent to the substations where it is monitored by remote terminal units (RTU).
The proposed access control mechanism
gives selective access to  consumer data
stored in data repositories and used by different smart grid users.
Users can be
maintenance units, utility centers, pricing estimator units or analyzing and prediction groups. 
We solve this problem of access control using cryptographic technique of attribute-based encryption.
RTUs  and  users have attributes and cryptographic keys distributed by several key distribution centers (KDC).
RTUs send data encrypted under a set of attributes.
Users can decrypt information provided they have valid attributes.
The access control scheme is distributed in nature and does not rely on a single KDC to distribute keys.
Bobba \emph{et al.} \cite{BKAA09} proposed an access control scheme, which relies on a centralized KDC and is thus prone to single-point failure.
The other requirement is that the KDC has to be online, during data transfer which is not required in our scheme.
Our access control scheme is collusion resistant, meaning that users cannot collude and gain access to data, when they are not authorized to access.
We theoretically analyze our schemes (with mathematical proofs of correctness) and
show that the computation overheads are low enough to be carried out in smart grids.
To the best of our knowledge, ours is the first work on smart grids, which integrates these two important security components (privacy preserving
data aggregation and access control) and presents an overall security architecture in smart grids.

\end{abstract}

{\bf Keywords:} Access control, Decentralized attribute-based encryption, Bilinear maps, Homomorphic Encryption, Smart meters

\section{Introduction}
\label{sec:intro}
Smart grids are next generation electricity grid system which will integrate power and communication networks. 
With the growing demand for electricity, there is a need to develop smart grids which can cope up with the demand
by intelligently using different power resources and integrating different components like vehicles, and wireless devices. 
Smart grids should have capabilities that would enable it to deal with power outages by balancing supply and demand.
This can be achieved by intelligently balancing the consumption between peak and off-peak periods. 
One recent suggestion has been to charge electric vehicles (also incorporated into the grid) during the off-peak period and
discharge it back into the grid. In this way the grid is bi-directional, energy can be used when needed and discharged back
into the grid when not needed. 

The operation of smart grid involves many aspects:
generation of power using different sources like solar, wind, geothermal, nuclear, fossil-fuel, 
the intelligent distribution of power by monitoring the demand of power in different regions and different customers,
monitoring the power usage by customers using smart meters and intelligently deliver power when needed, 
building and integrating appliances into the grid, like vehicles (plugged in electric vehicles - PHEV) and wireless devices. 

Research in smart grid is very important and involves a broad range of problems. 
An important problem is to design an architecture integrating all the components which can efficiently use electricity.
Smart grid architectures have been proposed and discussed by Bose \cite{B10}.  
It comprises of power infrastructure and information infrastructure \cite{KTKL10}. 
Power infrastructure consists of power equipments like generators, transformers, transmission lines, voltage regulators, capacity banks, meters etc, 
which help to  deliver electricity. 
The power infrastructure involves generation of power from different sources and their reliable and efficient transmission. 
Energy efficient distribution of power is presented in \cite{WYJ09} and \cite{MMD10}. 

The information infrastructure helps in communication and ensures safety and reliability. 
It measures the status of the devices in the grid, balances demand and supply, helps in diagnosis of faults, helps authentication of devices
and helps in the smooth working of plugged in devices like vehicles. 
Devices might have sensors to sense different conditions and can be simple devices as smoke detectors and automatic light switches etc.
There are also devices called phasor measurement units (PMUs) which measure electrical waves in the grid. 
PMUs are clock synchronized (through GPS) sensors that can read current and voltage phasors at a substation bus on the transmission power network 
\cite{BKAA09}. These phasors can send 50-60 measurements per second \cite{GSA09}. 
Load balancing is an important aspect of research.  
Direct load control (DLC) \cite{RCO09} can remotely control appliances in homes and workplaces and reduce energy consumption. 
Game theoretic techniques are being increasingly used to optimize consumption. 
One way to do this is  consumption scheduling \cite{MWJSL10}.  

There is a huge economic aspect of smart grids and demands a lot of attention. 
This relates to pricing and marketing policies, legal and ethical issues. 
At one hand it is important to switch towards green energy like solar, wind etc and on the other hand it is important how to 
make best use of these renewable sources of energy and integrate them into the grid. 
It might be easier to use the energy close to the source to reduce transmission loss and costs. 
Several pricing policies are also being regulated by the government. These also require manual and ethical considerations. 
For example, reducing the consumption of electricity at peak hours. 
Critical-peak pricing (CPP), real-time pricing (RTP), time-of-use pricing (ToUP) are popular ways of reducing consumption. 
These policies impose different rates during different time  of the day (more during peak hours) or year (cold days in winter and hot days in summer). 

Control decisions of embedded systems in critical infrastructure can have significant impact on human life and the environment. 
Cyber physical systems need to combine computational decision making on the cyber side with physical control on the device side. The network that connects intelligent devices must ensure that critical data are available for making informed decisions. 
Smart grid with all its advantages must be fault tolerant,  reliable and secure. 
It should be possible to detect fault early in the system, to protect against cascading effects. 

Conventional power grids utilize centralized command and control structures, such as SCADA (Supervisory Control And Data Acquisition) systems relying on human monitors for decision making. 
SCADA systems provide the mechanism for identifying faults. However, they represent a single point of failure within today’s power grid. 
Further, even when SCADA systems are running with specified parameters, catastrophic faults (e.g. cascading failures) can occur \cite{ZM10}.
Detecting faults earlier in the network is extremely important because faults can easily propagate throughout the network and
lead to complete breakdown. Such a blackout occurred in August 2003, which affected 45 million people in US and 10 million people in Canada. 
The damages due to this blackout has been estimated as 6 billion US dollars. 
Thus, designing fault tolerant grid is very important. 
In this direction, it should be possible to divert the power to alternate route once a particular route is disrupted. 
Zimmer and Mueller \cite{ZM10} proposed a fault tolerant network routing through software overlays. 

An important problem which is associated with smart grid is the problem of security and privacy. 
It is very important to secure the smart grid, not only from terrorist attacks,
but also from customers, and building authorities who tamper with various devices.  
The information from remote terminal units (RTU) at the substation  is needed not only for electricity distribution, 
but also for calculating costs, for predicting future 
conditions and for monitoring in case of unexpected behavior. 
All these tasks are done by separate users, for example the electrical and maintenance board will monitor the network, 
the costs calculation and analysis is done by the auditing unit and to predict future behavior researchers can be involved. 
All information must be sent only to the users responsible for specific job. 
Access control thus becomes a very important issue in smart grids. 
In future, when content distribution will also be included into the smart grid (our assumption is that future smart grids will also have cable 
integrated into it),
it will be necessary to regulate the access, such that two or more users do not collude and access information they cannot individually access. 
Existing literature focus on either authentication authentication \cite{FFKLS10,CSSL11,KBYAH10} or
privacy protection \cite{RSMP11,SKTP11}. 
Surveys on security and related aspects of smart grids appear in \cite{MR10}.  

We present a security architecture that integrate privacy preserving data aggregation and access control for the first time. 
Data aggregation has been studied by Li \emph{et al.} \cite{LLL10}, however it is very limited in scope. 
It presents privacy protected data aggregation in a local neighborhood (typically a building area network) without focusing 
on large scale aggregation. 
It also does not say how keys are distributed and more concerned with efficient construction of data aggregation trees. 
Access control has been studied by Bobba \emph{et al.} \cite{BKAA09}. 
They proposed a policy based encryption scheme for access control in smart grids.
The main assumption is the existence of a fully honest key distribution center (KDC) who distributes keys and access policies to 
data senders and receivers. 
A receiver can decrypt information, if it has a valid set of attributes. 
The policies are implemented in XML and the encryption mechanism uses KEM-DEM hybrid encryption paradigm introduced by Cramer and Shoup \cite{CS04}. 
KDC distributes keys and access policies.

The scheme in \cite{BKAA09} is prone to failure if the single KDC is compromised. 
It also demands the KDC be online during data access, thus halting all activities during failure or maintenance. 
For reasons of efficiency and security, multiple KDCs is desirable. 
For this reason, we use multiple KDCs. Our access control scheme is based on
attribute based encryption protocol, which is being increasingly used for access control in different domains like clouds \cite{RNS11c}, 
ad hoc networks \cite{RNS11} etc. 

\begin{figure*}[htb]
\begin{centering}
\includegraphics[width=6in]{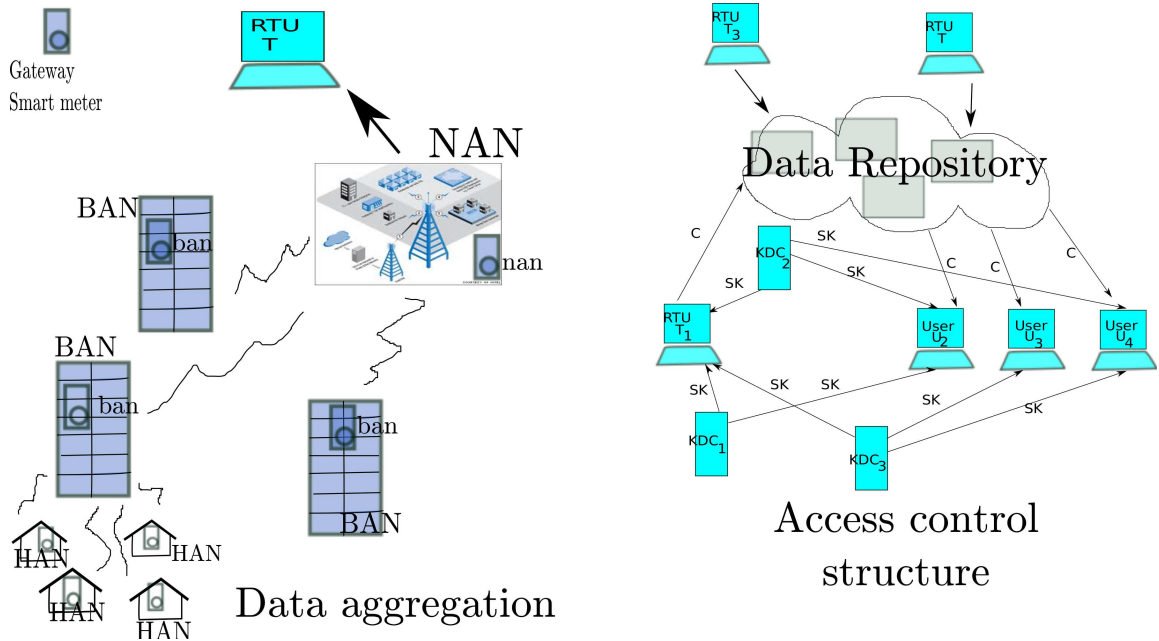}
\caption{
Aggregation and access control architecture
}
\label{fig:architecture}
\end{centering}
\end{figure*}

Our architecture consists of two parts, the first network consists of home area networks (HAN), building area network (BAN) and neighborhood
area network (NAN) which reports to a substation. 
For each home area network there is a gateway smart meter $han$ which collects information and sends to  
the building area network. The gateway $ban$ aggregates all information from smart meters in the BAN and sends to the
$nan$ at the neighborhood area. 
$nan$ reports to the substation. 
		
The second part consists of the RTU at the substation
who send aggregated results to data centers for storage. 
The data centers distribute information to users for maintenance, auditing, future predictions etc. 
We solve the problem of privacy-protected aggregation at different levels like HAN, BAN, NAN, transmitting to  RTU and
then providing access control of the data stored at the data repository. 
The smart grid architecture is depicted in Figure \ref{fig:architecture}. 

The data aggregation network  on the left side of Figure \ref{fig:architecture} collects and aggregates data and sends to the RTU at the 
nearest substation. 
The right side of the Figure \ref{fig:architecture} shows the access control network, consisting of
RTUs, KDCs, data repository and users. 

Aggregation at each stage uses Paillier additive homomorphic encryption \cite{P99} which ensure that data can be aggregated only knowing the ciphertext, 
so that the plaintext can be hidden. 
This will protect the privacy of  individuals as well as a particular locality. 
Access control operates in the following way:
The RTU collects information from different units and sends to the data repository, encrypting the data under a set of  attributes.
Attributes of data can be the source of energy like solar, fossil-fuels, etc, or type of user like individual or corporate or plugged in vehicles, 
or the type of load like lower consumption equipment (as in lights, television) or high consumption equipment (dryers, heaters etc). 
The RTU can also add new attributes depending on the
time of collection (peak/offpeak time), type of user who can access (like engineers, environmentalist), location of the user (region/city) etc.
In this way the RTU builds an access policy for the data. 

The task of the KDC is to distribute keys to the RTU and users, such that the data is securely kept in the data repository and
retrieved only by authorized users. 
The KDCs can be energy management units who manage key distribution for attributes like source of energy, 
or power control units who look after key distribution for different types of equipments (like high-energy consumption or low energy consumption)
or administrative officers who distribute keys depending on the type of user that the RTU wishes to give access.
The KDCs also give keys to the users to enable them to decrypt messages, depending upon the attribute
they possess. 
For example if a environmentalist is interested in green energy (like solar/wind) then he/she is given keys corresponding to these attributes. 

Different users of data can access information stored in the databases, provided they have a valid access structure.
For example, a maintenance unit might want to collect information from residential and corporate users, which run on fossil fuel and
which have have consumed more than a given limit of electric power per day. 
Researchers on the other hand might be interested on predicting load due to charging and discharging of plugged in hybrid electric vehicles (PHEV)
during day time.
For each RTU, the key distribution centers distribute attributes and public and private keys.
The RTUs might have specific access policies. 
The RTUs encrypt the data with keys (depending on the access policy)  and sends to the storage units.

The data repository is responsible for both storing and processing information.
It can have several data storage centers. 
Processing can be done by one or several data processors, which can provide efficient search techniques or help organize the data in databases. 
We will not consider such aspects here. 
Users are also given attributes and secret keys. 
When users request data from the data repository, then they can decrypt those data that have matching attributes.
We apply a recent variant of attribute-based encryption, proposed by Lewko and Waters \cite{LW11}, modified according to the needs of smart grids.

Since a smart grid has a bidirectional flow of information, another feature can be added, in which users can send information to
selected RTUs.  
For example the maintenance units can ask certain RTUs  to reduce power consumption in certain  units on certain  weekends (for electrical maintenance)
or involve in a more complicated tasks.
Current RTUs are programmable and in future it would be possible to incorporate more features into them.

\subsection{Our contribution}
\label{subsec:contribution}
\begin{itemize}
\item We propose a new security architecture for smart grids, integrating privacy preserving aggregation and access control.
\item Aggregation of data at gateway smart meters of BAN, HAN, NAN is done using homomorphic encryption.
\item We propose an access control scheme which 
gives limited access to data users like audit teams, technical maintenance teams, engineers, environmentalists, research groups, policy makers, management groups, etc.
\item The scheme is collusion secure, in that no two users can collude and gain access to data they alone cannot avail.
\item Malicious and illegal users can be revoked .
\item We evaluate the performance and show it is feasible in the smart grids.
\item We provide a list of open problems not considered before and provide partial solution to these.
\end{itemize}

\subsection{Organization}
The paper is organized in the following way. 
We present related work on security and privacy issues of smart grids in Section \ref{sec:related}. 
This section also discusses Paillier's cryptosystem \cite{P99} and Lewko and Water's scheme \cite{LW11}. 
In Section \ref{sec:prelims}, we describe mathematical tools, network model and assumptions used in our work. 
We discuss data aggregation in details in Section \ref{sec:aggregation} and access control in Section \ref{sec:access-control}.  
In Section \ref{sec:analysis} we analyze the security and performance of our scheme and compare with existing ones. 
We present open problems in Section \ref{sec:open-problems} and conclude in Section \ref{sec:conclusion}. 

\section{Related work}
\label{sec:related}
In this section we first present related work on security in smart grids. 
We discuss previous work on homomorphic encryption and show why we chose Paillier's \cite{P99} homomorphic scheme for data aggregation in 
smart grids. 
Then we discuss several attribute based encryption techniques
to show why Lewko and Water's \cite{LW11} is most suited to access control in smart grids. 

\subsection{Security and privacy in smart grid}
\label{subsec:security-privacy}
As we noted in the introduction that security is an important aspect of smart grid, not only to protect from military threats
but also protect from misbehaviors of consumers and different service providers integrated into the grid. 
Security issues in smart grid mainly focus on authenticating customer, operators, and service providers. 
There are several components in smart grids like SCADA (Supervisory control and data acquisition), cellular and  mobile links, 
fiber optic cables etc. Security of each of these components is essential in securing the grid.
The cyber security requirements of smart grids have been outlined by the National Institute of Standards and Technology (NIST) \cite{NIST10}. 
To protect smart grids smart grid PKI infrastructure has been proposed. 
This infrastructure should provide certification to the various components and devices in the network.
Specific certification policies need to be issued \cite{MR10}. 
Device attestation (ensuring the validity of the device) is an important requirement, since an invalid device can collect and send wrong electricity readings
and can result in overloading and failure. 

It is also important to authenticate the message sent by devices and components in the network. 
Each device has an identity. 
Fouda \emph{et al.} \cite{FFKLS10} proposed a message authentication protocol for a Smart grid which has the following network 
structure: The home area network (HAN) consists of individual apartment units which collect and report to the building area network BAN, 
which further report to the neighborhood area network (NAN). There are gateway smart meters installed in each unit in a HAN, BAN and NAN
that collect information and send to the next level. 
The communication in HAN is done using IEEE 802.15.4 Zigbee radio communication \cite{ZIGBEE}. 
The authors propose authentication techniques using Diffie Hellman key agreement protocol, Sign-and-Mac (SIGMA) and Internet Key Exchange (IKEv2) \cite{IKE2}. 
The message authentication techniques has less communication overheads compared to the scheme proposed in \cite{KK10}. 
In \cite{CSSL11}, the authors proposed an authentication of metering messages which they claim to have less overheads. 

Privacy in smart grid has been extensively studies because of its importance. 
Though we want to know the amount of consumed data, we do not want to know the details, for example which user uses which appliance 
and at what time. 
This is to protect the privacy of the user. 
Studying the details of consumed data helps to deduce the behavioral pattern to a certain extent. 

In order to annonymize the metering data, Efthymiou and Kalogridis  \cite{EK10} proposed a third party key escrow policy and uses several pseudonymous 
IDs instead of unique identifiers. 
Rial and Danesiz \cite{RD10} proposed a privacy preserving protocol for smart meters using zero knowledge proof \cite{QQQQGGGGGGB89}, 
which ensures correct payment of fees with disclosing the details about consumption data. 
The protocol is implemented into smart meters and is generic enough to consider different billing settings like electronic traffic pricing, 
pay-as-you-drive car insurance etc.

However, imposing privacy policies can affect the utility. 
Very recently Rajagopalan \emph{et al.} \cite{RSMP11} quantify (using Gaussian model) how the utility is affected when privacy preservation is applied. 
They proposed that filtering out frequency components that are low in power can achieve an optimal utility-privacy solution. 
  
Access control has not been studied much, even though there is a big need for it. 
Bobba \emph{et al.} \cite{BKAA09} presented a  centralized access control scheme. 
As mentioned before, in the introduction,  these scheme has a drawback because centralized authority can be a single point of failure. 
It also requires that the KDC is online during data transfer. So the system is affected when the KDC is faulty or switched off for maintenance.  
ABE was discussed in connection to access control in smart grids in \cite{BKAA09}, but was not applied. 

We now present an overviews about homomorphic encryption and then ABE. 

\subsection{Overview of homomorphic encryption schemes}
\label{subsec:homomorphic}
The main idea for using homomorphic encryption is to carry out different operation on ciphertext and return results without knowing
the plaintext messages. It has been largely used in voting mechanisms where the individual votes should not be known but the decision is 
important. This is done in order to achieve privacy of the voter. 
Homomorphism has also been applied to data aggregation in ad hoc networks (for example \cite{CCMT09}).
Several encryption techniques exists which support different homomorphism, like multiplicative homomorphism 
(RSA \cite{RSA83}), additive homomorphism (Paillier \cite{P99}, Boneh-Goh-Nissim \cite{BGN05}) or recently proposed
fully homomorphic scheme \cite{G09} which can support complicated functions. 
During aggregation, we need to add the results as such we choose Paillier's cryptosystem, which supports additive homomorphism. 
Boneh-Goh-Nissim \cite{BGN05} is not a suitable choice because the set of messages in their system is very restrictive.

\subsection{Overview of attribute based encryption}
\label{subsec:abe}
ABE is a cryptographic protocol proposed by Sahai and Waters in 2005 \cite{SW05}.
The main idea is to distribute attributes to receivers and attributes to senders so that only receivers 
with matching  attributes structure can access the data.  
Data is encrypted using attribute based keys, which are distributed by a central key distribution center (KDC). 
It is to be noted that identity based encryption (IBE) proposed by Shamir \cite{S84} is a special form
of ABE, where senders have one unique attribute (i.e., its identity). 
The protocol proposed by Sahai and Waters was restricted that only threshold access structures ($t$-out-of-$n$) could be supported. 
This means that if the receiver has $t$ attributes (out of $n$) in common to the sender, then it can decrypt the message. 
Goyal \emph{et al.} \cite{GPSW06} proposed a new ABE which can handle any monotonic access structure. 
These schemes are known as key-policy based (KP-ABE) schemes. 

Another type of protocols are known as ciphertext-policy ABE (CP-ABE) \cite{BSW07} (proposed by Bethencourt). 
In these the ciphertext is encrypted using a set of attributes under a given access structure. 
If a receiver has a matching set of attributes then it can decrypt the information. 

All the above schemes relied on a central key and attribute distribution center, which is prone to failures. 
Chase \cite{C07} proposed a multi-authority (same as multi-KDC) protocol, where several KDCs generate and distribute
keys and attributes. There is also a central trusted authority who coordinates the multiple KDCs. 
To completely do away with central authority, Chase and Chow \cite{CC09} proposed a scheme where the authorities can coordinate amongst themselves, 
but do not require a central authority. 
The drawback of this protocol was that the access structure was specific and required each user to have at least one attribute from each KDC. 
Both these scheme were KP-ABE.

Recently Lewko and Waters \cite{LW11} proposed a multi-KDC CP-ABE, which does not have trusted authority and coordination between the KDCs.
It also allows any type of monotonic access structure. 
We use Lewko and Waters scheme to design an access control mechanism for smart grids.

\section{Background}
\label{sec:prelims}
In this section we present our network model and the assumptions we have used in the paper. 
Table \ref{table:notations} presents the notations used throughout the paper. 
We also describe mathematical background used in our proposed solution.

\begin{table}[ht]
\begin{center}
\caption{Notations}
\begin{tabular}{|c|l|}
\hline
Symbols & Meanings \\
\hline
$U_u$ & $u$-th User\\
$T_i$ & $i$-th RTU\\
$han_i$ & Gateway smart meter at $i$-th HAN \\
$ban_i$ & Gateway smart meter at $i$-th BAN\\
$nan$ & Gateway smart meter at NAN \\
$A_j$ & KDC $j$\\
$\mathcal{A}$ & Set of KDCs \\
$\mathcal{W}$ & Set of attributes \\
$w = |\mathcal{W}|$ & Number of attributes\\
$L_j$ & Set of attributes that KDC $A_j$ possesses\\
$l_j = |L_j|$ & Number of attributes that KDC $A_j$ possesses\\
$I[j,u]$ & Set of attributes that  $A_j$ gives to user $U_u$\\
$I_u$ & Set of attributes that user $U_u$ possesses\\
$PK[j]$ & Public key of KDC $A_j$ or RTU $T_j$\\
$SK[j]$ & Secret key of KDC $A_j$ or RTU $T_j$\\
$sk_{i,u}$ & Secret key given by $A_j$ corresponding to attribute $i$ \\
& given to user $U_u$ \\
$S$ & Boolean access structure\\
$R$ & Access matrix of dimension $n\times h$\\
$|G|$ & Order of group $G$\\
$M$ & Message\\
$P_j$ & Power consumption by gateway at $j$th HAN\\
$C$, $c$ & Ciphertext\\
$PKT[i]$ & Packet sent by smart meter gateway $i$\\
$H$ & Hash function, example SHA-1\\
\hline
\end{tabular}
\label{table:notations}
\end{center}
\end{table}

\subsection{Network model}
\label{subsec:network-model}

Our network model consists of two parts:
\begin{enumerate} 
\item First part is to collect data from consumers and aggregate them at different levels.
There are smart meters at each household which collect information about electrical usage by the consumer. 
This is the home area network (HAN).
The gateway smart meter processes data and sends to the smart meter at the BAN, which then aggregates and sends to the smart meter at NAN. 
The NAN gateway sends information to the substations. 
\item Second part is similar to currently deployed Control and Data Acquisition, and Energy Management
System (SCADA/EMS). It consists of remote terminal units which collect information from the NAN and other sources like PHEVs and
sends to the SCADA/EMS. 
In our model, SCADA/EMS consists of data repository which stores the data collected by the RTU. 
It also has data processors to process data. 
There are also key distribution centers who distribute keys to RTU and users.
This architecture consisting of RTUs, data repositories, KDCs and users is similar to \cite{KTKL10}, however they didn't address the problem of access control. 
Data aggregation was also not included. 
Users can be system engineers, maintenance offices, auditors, policy makers, researchers etc. 
\end{enumerate}
The architecture is presented in Figure \ref{fig:architecture}.

\subsection{Assumptions}
\label{subsec:assumptions}
We assume that each device has an identity (an IP address) and can authenticate itself before interacting with the network. 
We will not design an authentication protocol here, but rely on the authentication protocol \cite{FFKLS10}, which has been designed
specially for Smart grid communication. 

All the smart meters at the data aggregation centers of BAN, HAN and NAN are assumed to honest but curious. 
This means that they always send correct aggregated results, but would like to know the data that it receives from the previous 
smart meter aggregator. 
Hence, we assume that data aggregation smoothly, but there is a need to protect the consumer's privacy. 

We also assume that the data storage center is honest but curious. 
This means, it can attempt to read the contents of the ciphertext and the attributes that the ciphertext might be carrying. 
The RTUs are also honest but curious, so we hide the privacy of individual customers. 
However, we must remember that when the RTUs are sending messages, they can choose their access policies according to their discretions, 
depending upon the data they are sending. 

As mentioned earlier, attributes can be one or more of the following types (but not limited to)
\begin{enumerate}
\item Type of energy source: fossil fuel, solar, hydroelectricity, wind.
\item Type of consumer: Individual, corporate, PHEV.
\item Location of the consumer: City, region. 
\item Type of appliances: Need based. For example essential like light, heat etc. Lower priority: Dryer,  washing machine.
\item Load based: High electricity consumption equipments like dryer, oven etc, low electricity consumption equipments like lights, television etc.
\item Type of user: Electrical engineer, power engineer, environmentalists, policy makers etc.  
\end{enumerate}

These attributes do not reveal the identities of the users, because the RTU collect these information from the users and aggregate them. 
Hence, there is no risk of the maintenance offices, researchers, policy administrators to know the individual identity. 
Thus, individual's privacy is protected.

\subsection{Formats of access policies}
\label{subsec:access policy}
Access policies can be in either formats 1) Boolean functions of attributes or 2) Linear Secret Sharing Scheme (LSSS) matrix. 
Any access structure can be converted into a Boolean function \cite{LW11}. 
An example of a boolean function is $((a_1\wedge a_2 \wedge a_3)\vee(a_4\wedge a_5))\wedge(a_6\vee a_7))$, 
where $a_1, a_2, \ldots, a_7$ are attributes. 
Boolean functions can also be represented by access tree, with attributes at the leaves and $AND (\wedge)$ and $OR(\vee)$  
as the intermediate nodes and root. 
Our pseudo-code of an algorithm that converts a Boolean function (in the form of access tree) to a LSSS  matrix is given in the Appendix. 
The algorithm is described in \cite{LW11} as follows. Root has vector (1). Let $v[x]$ be parent's vector. 
If node $x$=AND, then the left child is $(v[x]|1)$, and the right child is $(0, \ldots, -1)$.
If $x$=OR, then both children also have unchanged vector $v[x]$.
Finally, pad with 0s in front, such that all vectors are of equal length.
The proof of validity of the algorithm is given in \cite{B96}. 
Fig. \ref{fig:tree} shows an access tree with initial vectors.  The rows of $R$ are the required vectors.

\subsection{Mathematical background}
\label{subsec:maths}
We will use bilinear pairings on elliptic curves. 
Let $G$ be a cyclic  group of prime order $q$ generated by
$g$. Let $G_T$ be a group of order $q$. 
We can define 
the map 
$e: G \times G \rightarrow G_T$. The map satisfies the following properties:
\begin{enumerate}
\item 
$e(aP, bQ) = e(P, Q)^{ab}$ for all $P, Q \in G$ and $a,b \in \mathbb{Z}_q$, $\mathbb{Z}_q = \{0,1,2,\ldots, q-1\}$.
\item Non-degenerate: $e(g, g)\not=1$.
 
\end{enumerate}

We use bilinear pairing on elliptic curves groups. We do not discuss the pairing functions which mainly use
Weil and Tate pairings \cite{PBC} and computed using Miller's algorithm \cite{M86}.
The choice of curve is an important consideration, because it determine the complexity of pairing operations. 
A survey on pairing friendly curves can be found in \cite{FCT10}. 
PCB library (Pairing Based Cryptography) \cite{PBC}  is a C library which is built above GNU GMP (GNU Math Precision) library
and contains functions to implement elliptic curves and pairing operations. 
The curves chosen are either MNT curves or supersingular curves. 

\subsection{Paillier homomorphic scheme}
\label{subsec:paillier}
In this section we discuss Paillier's \cite{P99} homomorphic scheme which we will use for secure data aggregation protocol.
We will first discuss the encryption protocol and show how it can be used to support homomorphism.
Let $i$ be the receiver for whom a message is intended. The protocol consists of three algorithms:

\begin{enumerate}
\item Key generation: This algorithm generates the public keys, and global parameters, given a security parameter.
Let $N = q_1q_2$, where $q_1$ and $q_2$ are primes.
Choose $g \in \mathbb{Z}_{N^2}^{\ast}$, such that $g$ has order a multiple of $N$ modulo $N^2$.
Let $\lambda(N) = lcm(q_1-1, q_2-1)$, where lcm represents least common multiple.
Then public key of $i$ is $PK[i] = (N,g)$ and secret key $SK[i] = (\lambda(N))$.

\item Encryption: Let $M \in \mathbb{Z}_N$ be a message.
Select a random number: $r \in Z_N^{\ast}$.
The ciphertext $c$ is given by
\begin{equation}
\label{eq:pill_en}
c = E(M) = g^{M}r^N \mod N^2
\end{equation}

\item Decryption:
To decrypt $c$, $M$ can be calculated as
\begin{equation}
\label{eq:pill_de}
M = D(c) = \frac{L(c^{\lambda(N)} \mod N^2)}{L(g^{\lambda(N)} \mod N^2)} \mod N,
\end{equation} 
where the $L-$function takes input from the set $\{u < N^2|u=1 \mod N\}$ and computes $L(u) = (u-1)/N$. 
\end{enumerate}

Additive homomorphism is demonstrated in the following way.
Suppose $c_1 = E(M_1)$ and $c_2 = E(M_2)$ are two ciphertexts, for $M_1,M_2 \in \mathbb{Z}_N$.
Then, $D(c_1.c_2 \mod N^2) = M_1+M_2 \mod N$.
Thus, the sum of the ciphertext can be obtained from the plaintext.

We note that $r^N$ is used only to make the homomorphic computation indeterministic, the same message can be encrypted into different ciphertexts, to
prevent dictionary attacks.
 
\subsection{Lewko-Waters ABE scheme}
\label{subsec:lewko-waters}
Lewko-Waters \cite{LW11} scheme consists of four steps: 1) System Initialization, 
2) Key and attribute distribution to users By KDCs 3) Encryption of message by sender 
4) Decryption by receiver. 

\subsubsection{System Initialization}
\label{subsec:system_init1}
Select a prime $q$, generator $g$ of $G$, 
groups $G$ and $G_T$ of order $q$,  
a map $e:G\times G \rightarrow G_T$, and
a hash function $H : \{0,1\}^* \rightarrow G$ which maps the identities of users to $G$.
The hash function used here is SHA-1 \cite{S06}. 
Each KDC $A_j \in \mathcal{A}$ has a set of attributes $L_j$. The attributes disjoint  ($L_i \bigcap L_j = \phi$ for $i\not=j$).
Each KDC  also chooses two random exponents $\alpha_i, y_i \in \mathbb{Z}_q$.
The secret key of KDC $A_j$ is 
\begin{equation}
SK[j] = \{\alpha_i, y_i, i \in L_j\}.
\end{equation}
The public key of KDC $A_j$ is published: 
\begin{equation}
PK[j] = \{e(g,g)^{\alpha_i}, g^{y_i}, i\in L_j\}.
\end{equation}

\subsubsection{Key generation and distribution by KDCs}
\label{subsec:key_gen1}
User $U_u$ receives a set of attributes $I[j,u]$ from KDC $A_j$, and corresponding secret key $sk_{i,u}$ for each $i \in I[j,u]$

\begin{equation}
\label{eq:static_secret_node}
sk_{i,u} = g^{\alpha_i}H(u)^{y_i},
\end{equation}
where $\alpha_i, y_i \in SK[j]$. 
Note that all keys are delivered to the user securely using the user's public key, such that only that user can decrypt it using its secret key. 

\subsubsection{Encryption by sender}
Sender decides about the access tree. 
LSSS matrix $R$ can be derived as described in \ref{subsec:access policy}.
Sender encrypts message $M$ as follows: 
\begin{enumerate}
\item Choose a random seed $s \in \mathbb{Z}_q$ and a random vector $v \in \mathbb{Z}_q^h$, with $s$ as its first entry; $h$ is the number of leaves in the access tree (equal to the number of rows in the corresponding matrix $R$).
\item Calculate $\lambda_x = R_x\cdot v$, where $R_x$ is a row of $R$
\item Choose a random vector $w \in \mathbb{Z}_q^h$ with 0 as the first entry.
\item Calculate $\omega_x = R_x\cdot w$
\item For each row $R_x$ of $R$, choose a random $\rho_x\in \mathbb{Z}_q$.
\item The following parameters are calculated:
\begin{equation}
\begin{array}{l}
C_0 = Me(g,g)^s\\
C_{1,x} = e(g,g)^{\lambda_x}e(g,g)^{\alpha_{\pi(x)}\rho_x}, \forall x\\
C_{2,x} = g^{\rho_x} \forall x\\
C_{3,x} = g^{y_{\pi(x)}\rho_x}g^{\omega_x} \forall x ,
\end{array}
\label{eq:ciphertext1}
\end{equation}
where $\pi(x)$ is mapping from $R_x$ to the attribute $i$ that is located at the corresponding leaf of the access tree.
\item The ciphertext $C$ is sent by the sender (it also includes the access tree via $R$ matrix):
\begin{equation}
\label{eq:ciphertext}
C = \langle R, \pi, C_0, \{C_{1,x}, C_{2,x}, C_{3,x}, \forall x\}\rangle
\end{equation}
\end{enumerate}

\subsubsection{Decryption by receiver}
\label{subsubsec:decrypt}
Receiver $U_u$ takes as input ciphertext $C$, secret keys $\{sk_{i,u}\}$, group $G$, and outputs message $M$. 
It obtains the access matrix $R$ and mapping $\pi$ from $C$. 
It then executes the following steps:
\begin{enumerate}
\item $U_u$ calculates the set of attributes $\{\pi(x): x \in X\}\bigcap I_u$ that are common to itself and  
the access matrix. $X$ is the set of rows of $R$. 
\item For each of these attributes, it checks if there is a subset $X'$ of rows of $R$, such that
the vector $(1,0\ldots,0)$ is their linear combination.
If not, decryption is impossible. If yes, it
calculates constants $k_x \in \mathbb{Z}_q$, 
such that $\sum_{x \in X'} k_xR_x = (1,0,\ldots,0)$. 
$K$ is a vector consisting of $k_x$, $x\in X'$. 
\item Decryption proceeds as follows:
\begin{enumerate}
\item  
For each $x\in X'$,
$dec(x) = \frac{C_{1,x}e(H(u),C_{3,x})}{e(sk_{\pi(x),u},C_{2,x})}$
\item $U_u$ computes $M = C_0/\Pi_{x\in X'} dec(x)$.
\end{enumerate}
\end{enumerate}

\section{Secure aggregation by smart meters}
\label{sec:aggregation}
In this section we discuss how aggregation takes place at the gateway smart meters $han$, $ban$ and $nan$ 
before it reaches the substation. 
We assume that the following architecture exists:
The household meters collect samples the readings from different equipments and sends to the gateway smart meter at the HAN. 
The gateway smart meters $han$ send their aggregated results and send to the $ban$. 
The gateway  smart meter $ban$, aggregates all the readings from the gateway meters at HAN meters and sends to the NAN. 
The gateway HAN smart meter aggregates all the readings from the gateway BANs  and sends to the nearest substation.  
This is depicted in Figure \ref{fig:architecture} (left side). 

An RTU $T_i$ is securely given  $PK[i] = (N,g)$ (as in key generation step in Section \ref{subsec:paillier}) 
and also the secret key $SK[i]=\lambda(N)$.  
Each smart meter in the network knows the public key $PK[i] = (N,g)$ of its nearest RTU substation $T_i$. 
Each gateway smart meter $han_j$  sends a data packet which consists of two fields: the attributes field $f$ and the power consumption field $P_j$. 
The power consumption field is encrypted with the public key of the substation. 
A packet looks like 
\begin{equation}
PKT[han_j] = f||c_j = f||E(P_j), 
\end{equation}
where $E(P_j) = g^{P_j}r_j^N \mod N^2$ ($r_j\in \mathbb{Z}_N^{\ast}$ is chosen randomly by the smart meter). 

This packet is then send to the gateway BAN, $ban_l$ which aggregates all the results. 
Here it checks for the attributes field. 
For packets which have the same set of attributes, it processes the aggregated power consumption. 
The aggregated result is given by
$c_{ban_l} = \Pi_{j\in HAN}c_j$.  
The new packet looks like $PKT[ban_l] = f||c_{ban_l}$.

The packets collected by the gateway BANs are then send to the NAN. 
It performs a similar operation and aggregates information from packets having same set of attributes. 
The aggregated result is 
$c_{nan} = \Pi_{ban_l\in BAN}c_{ban_l}$. 
The packet $PKT[nan] = f||c_{nan}$ is then sent to the nearest substation. 

The RTU $T_i$ at the substation reads the content of the packet. 
It then decrypts the aggregated result because it has the secret key $SK[i]$. 

We note that \\
$c_{nan} = \Pi_{ban_l\in BAN}(\Pi_{j\in HAN}c_j)$ \\
\\
	$= \Pi_{ban_l\in BAN}(g^{\sum_{j\in HAN}P_j})(\Pi_{j\in HAN}r_j)^N \mod N^2$\\
\\
	$= g^{\sum_{j}P_j}(\Pi_{j}r_j)^N \mod N^2$

Using the value of $\lambda(N)$, the aggregated message can be decrypted by the RTU (as given in Section \ref{subsec:paillier}).

We next consider a very small example to show how this works in practice.  

\subsection{Example}
We show only the data having same set of attributes. 
The aggregation network is shown in the Figure \ref{fig:nan}. 

The HANs collect data from different devices and the encrypted data $c_1, c_2, \ldots, c_5$ to the respective BANs. 
Here $c_i = g^{P_i}r_i^N \mod N^2$, for $i = \{1,2\ldots,5\}$.
The BAN gateways aggregate the results. 
$ban_1$ calculates 
\begin{center}
$c_{ban_1} = c_1c_2 =  g^{P_1+P_2}(r_1r_2)^N \mod N^2$,
\end{center} 
 while
$ban_2$ calculates 
\begin{center}
$c_{ban_2} = c_3c_4c_5 =  g^{P_3+P_4+P_5}(r_3r_4r_5)^N \mod N^2$.
\end{center} 
The BAN gateways then send to the NAN, which aggregates the result as\\

\begin{center}
$c_{nan} = c_{ban_1}c_{ban_2} = c_1c_2c_3c_4c_5$\\
\vspace*{.1cm}
$= g^{P_1+P_2+P_3+P_4+P_5}(r_1r_2r_3r_4r_5)^N \mod N^2$
\end{center}

When RTU receives ciphertext $c_{nan}$, then decrypts it suing its secret key $\lambda(N)$ as\\
\vspace*{.1cm}
$D(c_{nan}) = \frac{L(c_{nan}^{\lambda(N)}) \mod N^2)}{L(g^{\lambda(N)}) \mod N^2)}\mod N$\\
\\
$~~~~~= \frac{L(g^{(P_1+P_2+P_3+P_4+P_5)\lambda(N)}) \mod N^2)}{L(g^{\lambda(N)}) \mod N^2)}\mod N$\\
\\
$~~~=P_1+P_2+P_3+P_4+P_5$.\\
This is because $(r_1r_2r_3r_4r_5)^{N\lambda(N)} = 1 \mod N^2$.

\begin{figure}[htb]
\begin{centering}
\includegraphics[width=3in]{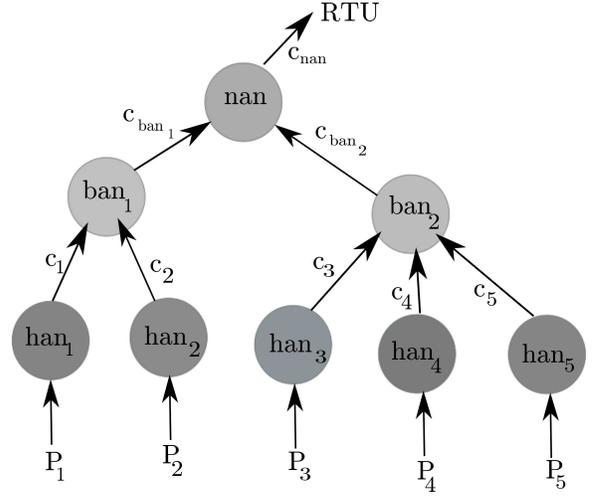}
\caption{
Example showing data aggregation
}
\label{fig:nan}
\end{centering}
\end{figure}

\section{Access control scheme}
\label{sec:access-control}
We will first provide a sketch of the scheme and then discuss it in details.  

The parameters are chosen and distributed to the KDCs when they are installed. 
The attributes and key generation has been presented in \ref{subsec:lewko-waters}.

Encryption proceeds in two steps. The Boolean access tree is first converted to LSSS matrix. 
In the second step the message is encrypted and sent to the data storage center along with the LSSS matrix. 
A secure channel like ssh can be used for the transmission. 

Suppose an RTU $T_i$ wants to store a record $M$. 
$T_i$ defines the access structure $S$, 
which helps it to decide the authorized set of users, who can access the record $M$. 
It then creates a $m \times h$ matrix $R$ ($m$ is the number of attributes in the access structure) and defines a mapping function 
$\pi$ of its rows with the attributes (using Algorithm in Section \ref{subsec:access policy}). 
 $\pi$ is a permutation, such that $\pi: \{1,2,\ldots,m\} \rightarrow  \mathcal{W}$. 
The encryption algorithm takes as input the data  $M$ that needs to be encrypted, the group $G$, the LSSS matrix $R$, the permutation function $\pi$, 
which maps the attributes in the LSSS to the actual set of attributes. 
For each message $M$, the ciphertext $C$ is calculated as per the Equations (\ref{eq:ciphertext1}) and (\ref{eq:ciphertext}). 
Ciphertext $C$ is then stored in the data repository. 

When a user $U_u$ requests a ciphertext from the repository, 
the requested ciphertext $C$ is transferred using ssh protocol. 
The decryption algorithm proceeds as in Section \ref{subsubsec:decrypt}, and returns plaintext message $M$, if the user has valid set of attributes.

\subsection{An Example}
\label{subsec:example}
Suppose an RTU sends a data record to the data repository. 
This data can be the amount of electricity consumed over a certain period of time by high-consumption equipments which are run by fossil fuels. 
The RTU can give access to either researchers and policy makers or give selective access to environmentalist working on fossil-fuels
or power engineers who are supervising the usage of high-consumption equipments. 
There can be three types of KDC:
1)Type of users: $D_1$ (Researchers) , $D_2$ (policy makers), $D_3$ (Power engineers), $D_4$ (Environmentalists), etc,
2)Type of appliance: $E_1$ (High consumption), $E_2$ (Low consumption) etc,  
3)Source of power: $S_1$ (fossil-fuels), $S_2$ (solar), etc. 

Then the access tree is given in Figure \ref{fig:tree}. 
\begin{figure}[htb]
\begin{centering}
\includegraphics[width=3.0in]{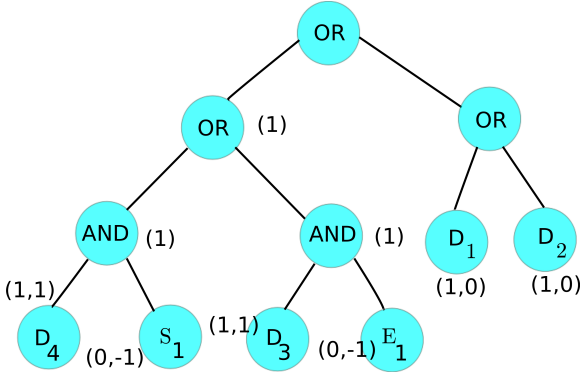}
\caption{
Access tree structure
}
\label{fig:tree}
\end{centering}
\end{figure}

The access matrix $R$  can be constructed using Algorithm in Appendix.  
Thus, 
\begin{center}
$R = 
\begin{pmatrix}
1 & 1 \cr
0 & -1 \cr
1 & 1  \cr
0 & -1 \cr
1 & 0  \cr 
1 & 0  \cr
\end{pmatrix} .
$
\end{center}
An environmentalist working on fossil fuels will be able to access this data, as also a power engineer monitoring high-consumption equipments. 
However an electrical engineer working on solar cells will not be able to read it.

Let there be three KDCs $A_1, A_2$ and $A_3$. 
The set of attributes of $A_1, A_2$ and $A_3$ are
$L_1 = \{D_1, D_2, D_3, \ldots\}$ and
$L_2 = \{E_1, E_2, \ldots\}$ and 
$L_3 = \{S_1, S_2, \ldots\}$.
The RTU's access tree is given by  Fig. \ref{fig:tree}. 
Let $\pi$ be denoted as
\begin{center}
\begin{tabular}{c|c|c|c|c|c|c}
$x$ & 1 &2 &3 &4 &5 &6\\
\hline
$\pi(x)$ & $D_4$ & $E_1$ & $D_3$ & $S_1$ & $D_1$ &$D_2$
\end{tabular} .
\end{center}

Suppose an user (user $u = 3$) is an environmentalist studying fossil-fuels and solar energy, then 
he/she  is given the attributes $D_4$, $E_1$ and $E_2$. 
Thus, $I[1,3] = \{D_4\}$ and 
$I[2,3] = \{S_1,S_2\}$. 
Next the user is given secret keys $sk_{4,1}$ from $A_1$ and $sk_{1,3}$ and $sk_{2,3}$ from $A_3$.

During encryption, the RTU sends the information  $C = \langle R, \pi, C_0, \{C_{1,x}, C_{2,x}, C_{3,x}\}_{x \in \{1, 2, 3, 4, 5,6\}}\rangle$ to the
data repository.  
$C_0 = Me(g,g)^s$, where $s$ is chosen at random from $\mathbb{Z}_q$. 

When user 3 wants to access the above information $C$. $C$ is transferred securely, using ssh (an inbuilt secure shell standard protocol). 
The user first finds out the attributes that are present from $\pi$. 
He/she also finds that it has the attributes $D_4, S_1$ in common to the attribute in data. 
From the matrix $R$ it then finds that there are two rows corresponding to $D_4$ and $S_1$, such that
$(1, -1)+ (0,1) = (1,0)$ (linear combination of rows 1 and 2 of $R$ gives $(1,0)$). 

The user can thus calculate $e(g,g)^s$ according to Step 4 of the decryption mechanism. 
Once $e(g,g)^s$ is calculated, $M$ can be obtained. 
The data repository does not have the secret keys, and is unable to decrypt the message.

\subsection{Revocation of users}
\label{subsec:revoke}
Users can be revoked, either because they are faulty or have been tampered with. 
Once revoked, these should not be able to decrypt messages, even if they have valid attributes. 
We present a revocation mechanism to achieve this. 

For each revoked user $U_u$, $I_u$  is noted. 
Once the attributes $I_i$  are identified,
all data that possess the  attributes are collected.
For each such information record, the following steps are then carried out:
\begin{enumerate}
\item A new value of $s$, $s_{new} \in \mathbb{Z}_q$ is selected
\item The first entry of vector $v_{new}$ is changed to new $s_{new}$
\item $\lambda_x = R_x v_{new}$ is calculated, for each $x \in I_i$
\item $C_{1,x}$ is recalculated for $x \in I_i$
\item New value of $C_{1,x}$ is securely transmitted to the storage center
\item New $C_0 = Me(g,g)^{s_{new}}$ is calculated and stored in the storage center
\item New value of $C_{1,x}$ is not stored with the data, but is transmitted to users, who wish to decrypt the data.
\end{enumerate}

We note here that the new value of $C_{1,x}$ is not stored in the data centers but transmitted to the non-revoked users who have attribute $x$. 
This prevents a revoked user to decrypt the new value of $C_0$ and get back the message.

\section{Analysis and performance}
\label{sec:analysis}

\subsection{Security of aggregation mechanism}
\label{subsec:security-aggregation}
We will first show that the aggregation scheme gives correct results when the intermediate smart meter (HAN, BAN, NAN gateway) is honest.
We will then prove that the privacy of not only individual customers but also that of intermediate smart meters in BAN and NAN is preserved.

\begin{theorem}
The aggregation scheme presented in Section \ref{sec:aggregation} gives correct results when the intermediate smart meter (HAN, BAN, NAN gateway) is honest.
\end{theorem}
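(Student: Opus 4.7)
The plan is to proceed by direct calculation, following the packet as it travels up the aggregation hierarchy, and to invoke Paillier's additive homomorphism at each aggregation step. Since correctness is only claimed under the honest-but-curious assumption on intermediate smart meters, I may assume that each gateway $han_j$, $ban_l$, and $nan$ actually performs the multiplication of ciphertexts prescribed by the protocol, rather than tampering with them; I need not worry about adversarial deviation.

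First I would fix an RTU $T_i$ with Paillier key pair $(PK[i],SK[i])=((N,g),\lambda(N))$, and a set of HAN gateways $\{han_j\}$ all reporting (through BANs) to $T_i$ with the same attribute field $f$. Each $han_j$ sends $c_j = g^{P_j}r_j^{N}\bmod N^{2}$ for some $r_j\in\mathbb{Z}_N^{\ast}$. The first key step is to show that the BAN-level aggregate
\[
c_{ban_l}=\prod_{j\in \mathrm{HAN}(l)} c_j = g^{\sum_{j}P_j}\Bigl(\prod_{j}r_j\Bigr)^{\!N}\bmod N^{2}
\]
is itself a valid Paillier ciphertext of $\sum_j P_j$ under $PK[i]$, because $\prod_j r_j\in\mathbb{Z}_N^{\ast}$ (the units form a group modulo $N$). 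This is precisely the additive-homomorphism identity already recorded after equation~(\ref{eq:pill_de}).

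Next I would iterate the same argument one level higher: the NAN gateway forms $c_{nan}=\prod_{l} c_{ban_l}$, and by the same algebra this equals $g^{\sum_{j}P_j}\bigl(\prod_{j}r_j\bigr)^{N}\bmod N^{2}$, i.e.\ a Paillier ciphertext of the total $\sum_{j} P_j$ (the sum ranging over every HAN reporting up through some BAN to this NAN). The induction can be stated once and applied twice, or stated as a one-line lemma ``a product of Paillier ciphertexts of $M_1,\dots,M_k$ under $PK[i]$ is a ciphertext of $M_1+\cdots+M_k\bmod N$.''

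Finally, I would apply the Paillier decryption formula of equation~(\ref{eq:pill_de}) at the RTU. Substituting $c_{nan}$ and using $r^{N\lambda(N)}\equiv 1 \pmod{N^{2}}$ for every $r\in\mathbb{Z}_N^{\ast}$ (Carmichael's theorem applied to $\mathbb{Z}_{N^{2}}^{\ast}$), the random masks collapse and the $L$-function quotient returns $\sum_j P_j\bmod N$. I would close by observing that the attribute field $f$ is transmitted in the clear and is used only to group ciphertexts with matching attributes before multiplication, so the grouping step is straightforwardly correct and the claim follows. The only mildly nontrivial point — and the one I would state explicitly — is the vanishing of the randomizers modulo $N^{2}$ after exponentiation by $\lambda(N)$; everything else is a mechanical chase of the homomorphism.
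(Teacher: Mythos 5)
Your proposal is correct and follows essentially the same route as the paper: write the aggregated ciphertext in the closed form $g^{\sum_j P_j}(\prod_j r_j)^N \bmod N^2$, observe that the randomizers vanish because $(\prod_j r_j)^{N\lambda(N)} \equiv 1 \pmod{N^2}$, and apply the Paillier $L$-function decryption to recover $\sum_j P_j$. Your additional remarks (closure of $\mathbb{Z}_N^{\ast}$ under multiplication, the explicit two-level induction, and the attribute-field grouping) are harmless elaborations of the same argument rather than a different approach.
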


\begin{proof}
We first note that the decryption step given in Equation \ref{eq:pill_de} is correct. 
$c^{\lambda(N)} \mod N^2$ and $g^{\lambda(N)} \mod N^2$ both equal 1, when raised to the power of $N$. This is because $g$ has an order which is a 
multiple of $N$. 
Thus,  $c^{\lambda(N)} \mod N^2$ and $g^{\lambda(N)} \mod N^2$ are both $N$-th roots of unity. 
Such  roots are of the form $(1+N)^{\beta} = (1 + \beta N) \mod N^2$. 
Hence, the $L$-function can be computed as $L((g^M)^{\lambda(N)} \mod N^2) = M L(g^{\lambda(N)} \mod N^2) \mod N$. 
(details of proof appear in \cite{P99}).
From this, the value of $M$ can be obtained. 

For our aggregation scheme, 
\begin{center}
$c_{nan}  = g^{\sum_{j}P_j}(\Pi_{j}r_j)^N \mod N^2$.   
\end{center}
We note that $((\Pi_{j}r_j)^N)^{\lambda(N)} = 1 \mod N^2$. 
Thus, 
\begin{center}
$D(c_{nan}) = \frac{L((g^{\sum_{j}P_j})^{\lambda(N)} \mod N^2}{g^{\lambda(N)}\mod N^2}\mod N$  \\
\vspace*{.1cm}
$~~~~~~~~~~~~= \sum_{j}P_j$, (by similar argument as above). 
\end{center}
\end{proof}

\begin{theorem}
Data aggregation scheme proposed in Section \ref{sec:aggregation} protects the privacy of customers and all nodes in BAN and HAN. 
\end{theorem}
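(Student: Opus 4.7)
The plan is to reduce the claimed privacy guarantee to the semantic security of Paillier's cryptosystem \cite{P99}, which itself rests on the Decisional Composite Residuosity Assumption (DCRA). I would structure the argument in two parts: first, showing that intermediate aggregators at HAN, BAN, and NAN gateways learn nothing about individual consumption values or the partial sums contributed by lower-level aggregators, and second, showing that even the RTU, which does hold the Paillier secret key $\lambda(N)$, recovers only the grand aggregate and not any finer-grained information.

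For the first part, I would note that every packet transmitted upward in the tree has the form $PKT = f \| c$ where $c = g^{P} r^{N} \bmod N^{2}$ is a Paillier ciphertext under the RTU's public key and $f$ is the attribute field. No intermediate node possesses $\lambda(N)$, so distinguishing two consumption profiles that induce the same upstream aggregate reduces, via a standard hybrid argument, to distinguishing Paillier ciphertexts, and hence to DCRA. A key sub-step is to observe that the multiplicative aggregation $c_{\text{out}} = \prod_{j} c_{j} \bmod N^{2}$ produces an output that is itself a valid fresh Paillier encryption of $\sum_{j} P_{j}$, with effective randomness $\prod_{j} r_{j}$; this lets me argue that a downstream node sees a ciphertext whose distribution depends only on the partial sum, not on the individual summands.

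For the second part, I would argue that the RTU, upon decryption, obtains only $\sum_{j} P_{j}$ by the correctness result proved in the previous theorem, and that the independent choice of fresh $r_{j}$ by each smart meter guarantees the ciphertext it receives is distributed identically to a single fresh encryption of this sum. Consequently, no residual information about individual $P_{j}$'s is leaked in the ciphertext itself, and the RTU's view is information-theoretically the aggregate alone.

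The main obstacle will be setting up the distinguishing game cleanly at the level of an arbitrary intermediate node: one must pick two consumption assignments that agree on every partial sum seen above that node but differ below, and then carefully invoke DCRA in a hybrid that replaces one ciphertext at a time. A subtlety here is that the attribute field $f$ is transmitted in the clear, so the privacy claim must be interpreted relative to those publicly-visible categorical attributes (type of source, type of load, etc., as described in the assumptions); I would state this restriction explicitly and note that, by the network assumption that $f$ contains no household-identifying data, the privacy reduction goes through.
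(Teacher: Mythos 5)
Your proposal is correct and follows essentially the same route as the paper's proof: both reduce the privacy of customers against intermediate gateways and outsiders to the DCRA-based security of Paillier encryption, and both argue that the RTU, despite holding $\lambda(N)$, only ever receives (and hence only ever decrypts) the aggregate. Your version is more rigorous than the paper's informal sketch --- the explicit hybrid/distinguishing game, the observation that $\prod_j c_j$ is distributed as a fresh encryption of $\sum_j P_j$ with randomness $\prod_j r_j$, and the caveat that the attribute field $f$ travels in the clear are all points the paper merely asserts or omits.
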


\begin{proof}
Pailler's cryptosystem is intractable under Decisional Composite
Residuosity Assumption (DCRA) \cite{P99}. 
A customer sends encrypted data of its power consumption. 
The data is encrypted using public key of the nearest substation. 
As such no user or outsider can decrypt the data unless it knows $\lambda(N)$ which is difficult to solve.  

Next, we note that even the RTU at the substation cannot know the individual ciphertexts. 
This is because it receives encrypted aggregated results from which individual ciphertexts cannot be obtained. 
The use of the factor $r^N$ while encrypting message ($r$ chosen randomly for each message) helps to transmit
the same message as two different ciphertexts and thus prevents dictionary attacks. 

Thus, no user/substation can decrypt data that an individual customer sends, thus protecting privacy. 
\end{proof}

\subsection{Security of our access control scheme}
\label{subsec:security-abe}
We will show that only authorized users (possessing valid set of attributes) can decrypt the data stored in data repositories. 
The data center cannot change the content of the data stored in the data bases. 
The data center cannot collude with an user or RTU and decrypt any information it is not supposed to decrypt.
No two users can share their attributes and secret keys and decrypt any information they are not supposed to decrypt alone. 

\begin{theorem}
The proposed access control scheme is secure, collusion resistant, allows access of data only to authorized users and protects the privacy of 
individual consumers.
\end{theorem}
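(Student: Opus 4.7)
The plan is to split the theorem into four claims --- correctness for authorized users, confidentiality against unauthorized users, collusion resistance, and consumer privacy --- and reduce each to properties already established (the Lewko--Waters CP-ABE of Section \ref{subsec:lewko-waters}) or proved just above (Paillier aggregation in Section \ref{subsec:security-aggregation}).

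First I would verify that an authorized user $U_u$ recovers $M$. Since $I_u$ satisfies the access policy, the LSSS construction of Section \ref{subsec:access policy} guarantees a subset $X' \subseteq X$ and constants $\{k_x\}$ with $\sum_{x \in X'} k_x R_x = (1,0,\ldots,0)$, so $\sum_{x \in X'} k_x \lambda_x = s$ and $\sum_{x \in X'} k_x \omega_x = 0$. Substituting $sk_{\pi(x),u} = g^{\alpha_{\pi(x)}} H(u)^{y_{\pi(x)}}$ and the ciphertext components of equation (\ref{eq:ciphertext1}) into each $dec(x)$, bilinearity cancels the $e(g,g)^{\alpha_{\pi(x)} \rho_x}$ and $e(H(u),g)^{y_{\pi(x)} \rho_x}$ factors, leaving $dec(x) = e(g,g)^{\lambda_x} e(H(u),g)^{\omega_x}$. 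Combining these across $x \in X'$ with the reconstruction constants gives $e(g,g)^s$, and the final division into $C_0$ returns $M$. This step is a mechanical unrolling and I expect no subtlety here.

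Second, for confidentiality against unauthorized users and the honest-but-curious data repository, I would note that if $I_u$ does not satisfy $R$ then no $\{k_x\}$ reconstructing $(1,0,\ldots,0)$ exists over the available attributes, so no combination of $dec(x)$ factors produces $e(g,g)^s$; recovering $M$ from $C_0 = M \cdot e(g,g)^s$ without this factor reduces to the bilinear hardness assumption shown sufficient in \cite{LW11}. For collusion resistance I would highlight the role of the identity-dependent factor $H(u)$ in $sk_{i,u}$: if users $u$ and $u'$ try to pool their keys, the distinct hash values $H(u) \ne H(u')$ prevent the $e(H(\cdot),g)^{y_{\pi(x)} \rho_x}$ cross terms from cancelling inside any single $dec$ computation, so no linear combination of their pooled keys reconstructs $e(g,g)^s$. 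This inherits directly from the Lewko--Waters proof.

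Finally, for privacy of individual consumers I would appeal to the aggregation theorem just proved: before any data reaches an RTU it is already a Paillier ciphertext of $\sum_j P_j$, so under DCRA no RTU, KDC, storage center, or downstream user learns an individual $P_j$, and the attribute classes listed in Section \ref{subsec:assumptions} (source type, appliance class, coarse location, \ldots) carry no personal identifier. The main obstacle will be verifying that our add-ons to Lewko--Waters --- multiple independent KDCs with disjoint $L_j$, the concrete SHA-1 instantiation of $H$, and especially the revocation patch that rewrites $C_0$ and $C_{1,x}$ for $x \in I_i$ --- do not open a path for a revoked user (who still holds $sk_{\pi(x),u}$ for its former attributes) to combine the stored old $C_0$, the new $C_0$, and the new $C_{1,x}$ into $e(g,g)^{s_{new}}$. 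I expect to close this by observing that the refreshed $C_{1,x}$ values are transmitted only to non-revoked holders of attribute $\pi(x)$, so a revoked user never observes them and the standard Lewko--Waters argument applies verbatim to the refreshed ciphertext.
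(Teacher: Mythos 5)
Your proposal is correct and follows essentially the same decomposition and arguments as the paper's own proof: LSSS reconstruction of $(1,0,\ldots,0)$ for authorized access, the impossibility of such a reconstruction for unauthorized users, the identity-bound factor $H(u)$ in $sk_{i,u}$ blocking collusion, and the Paillier/DCRA aggregation argument for consumer privacy. You go somewhat further than the paper by explicitly unrolling the pairing cancellation in $dec(x)$ to verify correctness and by checking that the revocation mechanism does not leak $e(g,g)^{s_{new}}$ to revoked users, points the paper's proof leaves implicit or omits entirely.
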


\begin{proof}
We will first show that a user can decrypt data  if and only if it has a matching set of attributes. 
This follows from the fact that access structure $S$ (and hence matrix $R$) is constructed 
if and only if
there exists a set of rows $X'$ in $R$, and linear linear constants $k_x\in \mathbb{Z}_q$, such that $\sum_{x\in X'}k_xR_x = (1,0,\ldots,0)$.  
A proof of this appear in \cite[Chapter 4]{B96}.
For an invalid user, there does not exists attributes $x$, such that  
$\sum_{x\in X'}k_xR_x = (1,0,\ldots,0)$. 
Thus, $e(g,g)^s$ cannot be calculated. 
Hence, our scheme allows access of data only to authorized users. 

We next show that two or more users cannot collude and gain access to data that they are not individually supposed to access. 
Suppose that there exist attributes $\pi(x)$ from the colludes, such that $\sum_{x\in X}k_xR_x = (1,0,\ldots,0)$. 
However, $e(H(u),g)^{\omega_x}$ needs to be calculated 
in Section \ref{subsubsec:decrypt}.
Since different RTUs different values of $e(H(u),g)$, 
even if they combine their attributes, 
they cannot decrypt the message. 
Thus, our access control scheme is collusion secure.

We next observe that no outsider or even the data center administrator can decrypt any information stored in the databases. 
This is because an outsider or a data center administrator  does not posses the secret keys $sk_{i,u}$ (by Eq.(\ref{eq:static_secret_node})). 
Even if they collude with other users, they cannot decrypt data which the users cannot themselves decrypt, because of the
above reason (same as collusion of users).  
The KDCs are work autonomously and are not a part of the data center. 
Thus, no outsider can decode data stored in the repositories, without compromising the relevant KDCs.  
This makes our scheme secure. 

The RTUs receive aggregated results from the HAN, BAN and NAN. 
The consumers send encrypted data and it is never decrypted at any stage. 
This protects the privacy of consumer's data.

\end{proof}

\begin{table*}
\caption{Comparison of our scheme  with Bobba \emph{et al.}\cite{BKAA09}}
\begin{center}
{\small
\begin{tabular}{|c|c|c|c|c|}
\hline
Schemes & Robustness & Access policy & Revocation  & Online/offline  \\
 &  &  & possible or not & KDC  \\
\hline
Bobba \emph{et al.} \cite{BKAA09} & Not robust & Any  & Yes & Has to remain\\
& Centralized & boolean function &  & online   \\
 & administration &  &  &   \\
\hline
Our scheme & Robust & Any monotonic  & Yes & Need not be online \\
& distributed KDC &  boolean function &  &  \\
\hline
\end{tabular}
}
\end{center}
\label{table:comp}
\end{table*}

\subsection{Performance issues}
\label{subsec:performance}

We will first calculate the cost of aggregation. 
Encryption involves modular exponentiation of element $g$, which can be done using square-and-multiply technique in $O(\log N)$ time. 
Decryption involves calculating $L(u)$, which needs only one multiplication. 
Decryptions can be hastened using the technique already given in \cite{P99}. 
At each smart meter gateway $d$ values have to be multiplied (where $d$ is the indegree of that smart meter). 
So the costs are reasonable.

We will calculate the computation and communication overhead of access control scheme with and without user, RTU revocation.
In the first step of encryption, the access tree needs to be converted to an access matrix. 
Time taken to compute $R$ from $S$ is $O(m)$, where $m$ is the number of attributes in the access structure.
To check if there exists a set of rows in $R$ (such that step (2) of decryption holds), is equivalent to solving the equation
$KR = (1, 0, \ldots,0)$, for non-zero row vector $K$.
This takes $O(mh)$.
Since the list of attributes might not be too large, such overhead is very little. 

The most expensive operation during encryption or decryption is pairing. 
During encryption,  each user $U_u$ performs only one pairing operation (to calculate $e(g,g)$).
For each row $x$ corresponding to attribute, it also performs two scalar multiplications to calculate $C_{1,x}$,
one scalar multiplication to calculate
$C_{2,x}$ and one to calculate $C_{3,x}$.
Thus, there are a total of $4m$ scalar multiplications.
During decryption, there are two pairing operations, one for $e(H(u),C_{3,x})$ and the other for $e(sk_{i,u},C_{2,x})$,
for each $x$.
The number of pairing operations is thus $2m$ to calculate $e(H(u),C_{3,x})$.
There are also at most $m$ scalar multiplications to calculate $(e(g,g)^{\lambda_x}e(H(u),g)^{\omega_x})^{\sigma_x}$.
Therefore, the computation time is $(2m + 1)T_p + 5mT_m$, where $T_p$ and $T_m$ are  the time taken to perform pairing and scalar multiplication.

Using PCB library (Pairing Based Cryptography) \cite{PBC} with an MNT curve of embedding degree $k=6$ and $q=160$ bit curve, 
$T_{mul} = 0.6ms$ and $T_p = 4.5 ms$. 
For an access policy consisting of 10 attributes, decryption time at each user is 124.5 ms. 
The decryption time increases linearly with the number of attributes in the access policy.

Information to be sent from RTU to data repository, and from the storage centers  to user require
$m\log |G_T| + 2m\log|G| + m^2 +  |Data|$ bits, where $|Data|$ is the size of the data.
$m^2$ bits are needed to transfer the matrix $R$, and $m(|G_T| + 2|G|)+ |G_T|$ to transfer $C_0$, $C_{1,x}$, $C_{2,x}$ and $C_{3,x}$ and
$\log w$, to send $\pi$.
Thus, the communication overhead is $m^2 + m(|G_T| + 2|G|)+ |G_T| + \log w + |Data|$.

When revocation is required, $C_0$ needs to be recalculated.
$e(g,g)$ is previously calculated. So, only one scalar multiplication is needed.
If the user revoked is $U_u$, then for each $x $, $C_{1,x}$ has to be recomputed.
$e(g,g)$ is already computed. Thus, only two scalar multiplication needs to be done, for each $x$.
So a total of $2m'+1$ scalar multiplications are done by the KDCs, where $m'$ is the number of attributes  belonging to all revoked users.
Users need not compute any scalar multiplication or pairing operations.
Additional communication overhead is $O((m'+1)|G_T|)$.

\subsection{Comparison with other schemes}
\label{subsec:comparison}
In this section we compare our access control scheme with that of Bobba \emph{et al.} \cite{BKAA09}. 
We show (in Table \ref{table:comp}) that our scheme is more robust than theirs, because ours is a decentralized scheme.
The biggest drawback of Bobba \emph{et al.} \cite{BKAA09} is that the centralized KDC has to be online
all the time to allow access of data. This is a huge restriction, 
because the system will completely shut off in case of fault or even maintenance.

\section{Open problems in smart grid security}
\label{sec:open-problems}
The data center stores huge amounts of data and thus maintenance of these databases can be a huge concern. 
One recent proposal is to integrate smart grids with clouds. 
In this context, cloud can provide infrastructure to store this huge amount of data. 
There has been quite a lot of research in information secure information retrieval using searchable encryption \cite{SWP00,CGKO06},
where searching is done checking the indices of the encrypted keywords. Result is returned 
without knowing the keyword or the retrieved record. 

Cables can be incorporated into the smart grid system to enable users to get efficient access of content. 
Content distributors can either provide satellite radio subscriptions (for channels for a fixed duration like a month of a year), or
provide impulse pay-per-view facility (viewers pay as and when they view a program like in hotels), 
prepaid pay-per-view (viewers pay in advance as in for a hockey match or a concert), 
pay-per-channel (viewers pay subscribe for a channel). 
There are several security and privacy issues that need to be addressed here. 
Efficient access control is very important because of the large number of viewers and attributes involved. 
Previous work on content access control has been done by Pirretti \emph{et al.} \cite{PTMW10}.
The question is how to efficiently integrate them into the grid. 
The other issue is of privacy, such that a viewers identity if not revealed at any time. 
This might give valuable information about the behavior of the individual. 

\section{Conclusion}
\label{sec:conclusion}
In this paper we have presented an secure architecture in smart grids which integrates aggregation and access control. 
Homomorphic encryption is used to preserve customer privacy, while ABE is used for achieving access control. 
ABE has not been used in access control in smart grids, though it has been mentioned as a possibility in \cite{BKAA09}. 
The access control architecture is decentralized, which makes it more attractive and practical than \cite{BKAA09}. 
We have also addressed a few open problems that can be worked on in  future. 


\end{document}